\def\url@leostyle{%
  \@ifundefined{selectfont}{\def\UrlFont{\sf}}{\def\UrlFont{\small\ttfamily}}}
\newcommand{\eat}[1]{}
\definecolor{light-gray}{gray}{0.9}
\newenvironment{packed_enum}{%
  \begin{enumerate}%
  }{\end{enumerate}}
\newtheorem{lemma}{Lemma}
\newcolumntype{L}[1]{>{\raggedright\let\newline\\\arraybackslash\hspace{0pt}}m{#1}}
\definecolor{codegreen}{rgb}{0,0.6,0}
\definecolor{codegray}{rgb}{0.5,0.5,0.5}
\definecolor{codepurple}{rgb}{0.58,0,0.82}
\definecolor{backcolour}{rgb}{0.95,0.95,0.92}
\definecolor{verylightgray}{rgb}{.97,.97,.97}
\lstdefinelanguage{Solidity}{
	keywords=[1]{anonymous, assembly, assert, balance, break, call, callcode, case, catch, class, constant, continue, constructor, contract, debugger, default, delegatecall, delete, do, else, emit, event, experimental, export, external, false, finally, for, function, gas, if, implements, import, in, indexed, instanceof, interface, internal, is, length, library, log0, log1, log2, log3, log4, memory, modifier, new, payable, pragma, private, protected, public, pure, push, require, return, returns, revert, selfdestruct, send, solidity, storage, struct, suicide, super, switch, then, this, throw, transfer, true, try, typeof, using, value, view, while, with, addmod, ecrecover, keccak256, mulmod, ripemd160, sha256, sha3}, 
	keywordstyle=[1]\color{blue}\bfseries,
	keywords=[2]{address, bool, byte, bytes, bytes1, bytes2, bytes3, bytes4, bytes5, bytes6, bytes7, bytes8, bytes9, bytes10, bytes11, bytes12, bytes13, bytes14, bytes15, bytes16, bytes17, bytes18, bytes19, bytes20, bytes21, bytes22, bytes23, bytes24, bytes25, bytes26, bytes27, bytes28, bytes29, bytes30, bytes31, bytes32, enum, int, int8, int16, int24, int32, int40, int48, int56, int64, int72, int80, int88, int96, int104, int112, int120, int128, int136, int144, int152, int160, int168, int176, int184, int192, int200, int208, int216, int224, int232, int240, int248, int256, mapping, string, uint, uint8, uint16, uint24, uint32, uint40, uint48, uint56, uint64, uint72, uint80, uint88, uint96, uint104, uint112, uint120, uint128, uint136, uint144, uint152, uint160, uint168, uint176, uint184, uint192, uint200, uint208, uint216, uint224, uint232, uint240, uint248, uint256, var, void, ether, finney, szabo, wei, days, hours, minutes, seconds, weeks, years},	
	keywordstyle=[2]\color{teal}\bfseries,
	keywords=[3]{block, blockhash, coinbase, difficulty, gaslimit, number, timestamp, msg, data, gas, sender, sig, value, now, tx, gasprice, origin},	
	keywordstyle=[3]\color{violet}\bfseries,
	identifierstyle=\color{black},
	sensitive=false,
	comment=[l]{//},
	morecomment=[s]{/*}{*/},
	commentstyle=\color{gray}\ttfamily,
	stringstyle=\color{red}\ttfamily,
	morestring=[b]',
	morestring=[b]"
}
\begin{document}

\title{EventWarden: A Decentralized Event-driven Proxy Service for Outsourcing Arbitrary Transactions in Ethereum-like Blockchains}

\author{

\IEEEauthorblockN{Chao Li}
\IEEEauthorblockA{Beijing Key Laboratory of Security and\\
Privacy in Intelligent Transportation\\
Beijing Jiaotong University\\
Beijing, China\\
Email: li.chao@bjtu.edu.cn}

\and

\IEEEauthorblockN{Balaji Palanisamy}
\IEEEauthorblockA{School of Computing and Information\\
University of Pittsburgh\\
Pittsburgh, USA\\
Email: bpalan@pitt.edu}

}

\maketitle


\begin{abstract}

Transactions represent a fundamental component in blockchains as they are the primary means for users to change the blockchain state. Current blockchain systems such as Bitcoin and Ethereum require users to constantly observe the state changes of interest or the events taking place in a blockchain and requires the user to explicitly release the required transactions to respond to the observed events in the blockchain.
This paper proposes \texttt{EventWarden}, a decentralized event-driven proxy service for users to outsource transactions in Ethereum-like blockchains.
\texttt{EventWarden} employs a novel combination of smart contracts and blockchain logs.
\texttt{EventWarden} allows a user to create a proxy smart contract that specifies an interested event and also reserves an arbitrary transaction to release.
Upon observing the occurrence of the prescribed event, anyone in the Blockchain network can call the proxy contract to earn the service fee reserved in the contract by proving to the contract that the event has been recorded into blockchain logs, which then automatically triggers the proxy contract to release the reserved transaction.
We show that the reserved transaction can only get released from the proxy contract when the prescribed event has taken place.
We also demonstrate that as long as a single member in the Blockchain network is incentivized by the service fee to call the proxy contract after the prescribed event has taken place, the reserved transaction is guaranteed to get released.
We implement \texttt{EventWarden} over the Ethereum official test network. The results demonstrate that \texttt{EventWarden} is effective and is ready-to-use in practice.

\end{abstract}

\section{Introduction}

Blockchains such as Bitcoin\cite{nakamoto2008bitcoin} and Ethereum\cite{buterin2014next} are ledgers of transactions performed by nodes in blockchain networks on a global state.
Transactions form the fundamental component of blockchains as they are the primary means for users to change the blockchain state.
For example, transactions allow users to transfer funds among each other in Bitcoin-like blockchains and in Ethereum-like blockchains that support smart contracts~\cite{wood2014ethereum}, transactions enable users to create new smart contracts and invoke functions within existing smart contracts.
Even though many scenarios require users to release a transaction,
it is very common to see that users need to release transactions for responding to prior state changes in the blockchain in a timely manner.
Thus, current approaches require users to constantly keep monitoring their interested events.
\underline{E}vent-driven \underline{T}ransaction (\textit{ET}) refers to a class of service that enables a user to outsource a
transaction to get executed to change the blockchain state immediately after a certain state change (or event) has taken place in the blockchain.
Many scenarios require event-driven transactions in practice. 
For example, in a smart-contract-powered game such as the popular CryptoKitties~\cite{cryptokitties} in Ethereum, Alice may want to purchase a digital kitty with a rare color (i.e., via a transaction) once the kitty appears in the market (i.e., an event) while Alice may not be capable of watching the market for days. 
She may use the event-driven transaction by outsourcing the transaction to someone else and require the transaction to get released to purchase the kitty immediately after the kitty becomes available.
In another example, if Bob attends a smart-contract-powered auction~\cite{SealedBidAuction} and would like to increase his bid (i.e., via a transaction) once a bid higher than his old bid appears (i.e., an event), Bob may outsource the transaction and require the transaction to get executed only after the event occurs.
Besides the above-mentioned use cases of event-driven transactions, recent works on smart-contract-powered applications and protocols ranging from on-chain designs to side-chain and off-chain designs are heavily employing an emerging design pattern called \textit{challenge-response}~\cite{das2019fastkitten,mccorry2019pisa,poon2017plasma,kalodner2018arbitrum}, which can also benefit from adopting event-driven transactions.
Specifically, in the design of a multi-party smart-contract-powered protocol, a naive design pattern is to first verify the correctness of an action performed by a participant and then change the blockchain state based on the verified action.
However, the verification of actions may be quite expensive in Ethereum-like blockchains and may become even impossible in some cases.
In contrast, in the \textit{challenge-response} design pattern, each action is assigned a time period called the challenge period, during which other participants may choose to challenge the correctness of the action with counterexamples. Participants performing incorrect actions will then be found out and their security deposits will be confiscated.
Therefore, the \textit{challenge-response} design pattern incentivizes all participants to stay honest, which significantly reduces the cost of implementing a protocol.
It is easy to see that the timely responses for challenging incorrect actions are vital to such designs, otherwise the designs become insecure.
Event-driven transactions can thus facilitate the emerging \textit{challenge-response} design pattern by outsourcing transactions for challenging events such as disputes or suspicious actions.

Most of the current implementations (e.g., BlueOrion~\cite{blueorion} and Oraclize~\cite{oraclize}) of event-driven transactions (\textit{ET}) are heavily centralized.
These services require the users to entirely trust the centralized servers and their security properties are solely limited to a single point of trust.
More importantly, even in scenarios when the service providers are considered trustworthy, the services are still prone to unpredictable security breaches or insider attacks that are beyond the control of the service providers~\cite{chen2018certchain,hu2018searching}.
On the other hand, the emergence of Blockchain technologies such as Ethereum\cite{buterin2014next} and smart contracts\cite{wood2014ethereum} provides significant potential for new security designs that support a decentralized implementation of \textit{ET} to overcome the single point of trust issues associated with centralized approaches.
Recently, a few works~\cite{Clock,ning2018keeping,li2018decentralized-2,Kimono,SilentDelivery,mccorry2019pisa} have focused on such decentralized designs in Ethereum.
Nevertheless, their designs only support a certain type of event (e.g., reach of deadlines, disputes in state channels) or a single type of transaction in Ethereum (i.e., function invocation transaction).

In this paper, we propose \texttt{EventWarden}, a decentralized event-driven proxy service for users to outsource any type of transaction in Ethereum-like blockchains.
\texttt{EventWarden} employs a novel combination of smart contracts and blockchain logs.
\texttt{EventWarden} allows a user to create a proxy smart contract that specifies an interested event and also reserves an arbitrary transaction to release.
Upon observing the occurrence of the prescribed event, anyone in the Blockchain network can call the proxy contract to earn the service fee reserved in the contract by proving to the contract that the event has been recorded into blockchain logs, which then automatically triggers the proxy contract to release the reserved transaction.
We show that the reserved transaction can only get released from the proxy contract when the prescribed event has taken place.
We also demonstrate that as long as a single member in the Blockchain network is incentivized by the service fee to call the proxy contract after the prescribed event has taken place, the reserved transaction is guaranteed to get released.

In summary, this paper makes the following key contributions:
\begin{itemize}[leftmargin=*]
\item To the best of our knowledge, \texttt{EventWarden} is the \textit{first} decentralized proxy service designed for the general use of event-driven transactions in Ethereum-like blockchains.
\item After the service has been set up, \texttt{EventWarden} completely isolates the service execution from the state of users, without requiring any assistance from the user side.
\item We emphasize that \texttt{EventWarden} is a general approach that supports \textit{all} types of transaction in Ethereum, including fund transfer transaction, contract creation transaction and function invocation transaction.
\item \texttt{EventWarden} also supports \textit{all} types of events in Ethereum as long as the corresponding smart contracts write events into the blockchain logs.
\item We implement \texttt{EventWarden} over the Ethereum official test network. 
The results demonstrate that \texttt{EventWarden} is effective and is straight-forward to be used in practice.
\end{itemize}

The rest of this paper is organized as follows: 
We discuss related work in Section~\ref{s2} and introduce preliminaries in Section~\ref{s3}.
In Section~\ref{s4}, we propose the architecture designed for \texttt{EventWarden}.
Then, in Section~\ref{s5}, we propose the protocol designed for \texttt{EventWarden}.
We present the security analysis of \texttt{EventWarden} in Section~\ref{s6} and the implementation and evaluation of \texttt{EventWarden} in Section~\ref{s7}.
Finally, we conclude in Section~\ref{s8}.

\section{Related work}
\label{s2}

In this section, we briefly review related studies and discuss relevant techniques, which can be roughly divided into four categories.

\subsection{Native mechanisms}

Bitcoin was designed with a native mechanism named \textit{Timelocks}~\cite{antonopoulos2017mastering}.
Each transaction in Bitcoin can be set with a \textit{nLocktime} and the transaction can only be accepted by the network after the time point indicated by \textit{nLocktime}.
Besides, each Bitcoin transaction may involve one or multiple Unspent Transaction Outputs (UTXOs) and each UTXO may include an UTXO-level timelock named \textit{Check Lock Time Verify} (CLTV) that makes the UTXO available only after the specified time point. We consider this \textit{Timelocks} mechanism a good way to schedule fund transfer transactions and events relative to time, but it is difficult to be generalized to support other types of transaction in Ethereum-like blockchains beyond Bitcoin or other types of events.

\subsection{Client-side tools}

There are many tools at the client side that can achieve event-driven transactions.
For example, \textit{parity}~\cite{parity}, a popular Ethereum client, allows its users to prescribe a time point that they would like a transaction to be sent into the Ethereum network by the client.
However, since the scheduled transaction is locally stored at users' machines before the prescribed time point, the usage of such tools demands users' machines to keep connecting with the blockchain network, which fails to isolate the service execution with the state of users.

\subsection{Centralized services}

Oraclize~\cite{oraclize} is a blockchain oracle service that takes the role of a trusted third party (TTP) to execute a pre-scheduled transaction on behalf of a user at a future time point. 
Similarly, BlueOrion~\cite{blueorion} enables date related payments on the Stellar Ecosystem~\cite{stellar}, an Ethereum-like blockchain.
The limitations of these centralized services include both a single point of trust and a single point of control.

\subsection{Decentralized services}

A recent project called \textit{Ethereum Alarm Clock}~\cite{Clock} allows a user to deploy a request contract to the Ethereum network with a future time-frame as well as a reward and if any account is interested in the reward, the account can invoke the request contract during the prescribed time-frame to earn the reward by making the function invocation transaction maintained by the request contract get executed.
However, this scheme supports only a single type of transaction and a single type of event, 
namely the arrival of prescribed time-frame.
A more recent work~\cite{li2018decentralized-2} further employs threshold secret sharing~\cite{shamir1979share} to offer privacy protection for scheduling function invocation transactions that involve sensitive arguments (e.g., bid, vote).
Nevertheless, similar to \textit{Ethereum Alarm Clock}, this work only supports function invocation transaction and events relative to time.
Another recent work named PISA~\cite{mccorry2019pisa} enables parties in state channels to delegate to a third
party, called the custodian, to cancel execution forks on their behalf.
However, this work only supports events relative to state channels. 

In summary, our work in this paper tackles the key limitations of the state-of-the-art decentralized approaches using smart contracts~\cite{Clock,li2018decentralized-2,mccorry2019pisa}. 
That is, by leveraging proxy contracts and blockchain logs, our scheme supports \textit{all} types of transaction (fund transfer, function invocation and contract creation) in Ethereum, as well as \textit{all} types of events in Ethereum as long as the corresponding smart contracts write events into the blockchain logs.
To the best of our knowledge, \texttt{EventWarden} is the \textit{first} decentralized proxy service designed for the general use of event-driven transactions in Ethereum-like blockchains.

\section{Preliminaries}
\label{s3}
In this section, we discuss the preliminaries about smart contracts and Ethereum.
While we discuss smart contracts in the context of Ethereum~\cite{wood2014ethereum}, we note that our solutions are also applicable to a wide range of other Ethereum-like blockchains.

\subsection{Account types}
There are two types of accounts in Ethereum, namely External Owned Accounts (EOAs) and Contract Accounts (CAs).
To interact with the Ethereum blockchain, a user needs to own an EOA by locally creating a pair of keys.
Specifically, the public key $pk_{EOA}$ can generate a 20-byte address $addr(EOA)$ to uniquely identify the EOA and the private key $sk_{EOA}$ can be used by the user to sign transactions or other types of data.
Then, any user can create a smart contract by sending out a contract creation transaction from a controlled EOA.
The 20-byte address $addr(CA)$ of the created smart contract is generated in a deterministic and predictable way and becomes the unique identity of the contract account.

\subsection{Transactions and messages}
\label{s3.2}
The state of Ethereum blockchain can only be changed by the external world (i.e., EOAs) using transactions.
A transaction is a serialized binary message sent from an EOA that contains the following key elements:
\begin{itemize}
  \item \textit{recipient}: the recipient account address;
  \item \textit{value}: the amount of ether\footnote{\begin{scriptsize} The native cryptocurrency in Ethereum, denoted by $\mathsf{\Xi}$. \end{scriptsize}} to send to the recipient;
  \item \textit{data}: the binary data payload;
\end{itemize}

Depending on the type of \textit{recipient}, transactions can be divided into three categories.

\noindent \textbf{Fund transfer transaction}:
A transaction with an EOA as \textit{recipient} and a non-empty \textit{value} is a fund transfer transaction, which is used to transfer an amount of ether from the EOA creating the transaction to the \textit{recipient} EOA.

\noindent \textbf{Function invocation transaction}: 
When a transaction involves an CA as \textit{recipient} as well as a non-empty \textit{data}, it is usually a function invocation transaction for calling a function within an existing smart contract.

\noindent \textbf{Contract creation transaction}: 
In Ethereum, there is a special type of transaction for creating new smart contracts.
Such a transaction, usually called a contract creation transaction, carries a special \textit{recipient} address 0x0, an empty \textit{value} and a non-empty \textit{data} payload.
A smart contract (or contract) in Ethereum is a piece of program created using a high-level contract-oriented programming language such as \textit{Solidity}~\cite{Solidity2017}.
After compiling into a low-level bytecode language called Ethereum Virtual Machine (EVM) code, the created contract is filled into a contract creation transaction as the \textit{data} payload.

To make the transaction get executed to change the state of Ethereum blockchain, the transaction should be broadcasted to the entire Ethereum network formed by tens of thousands of miner nodes.
Following the Proof-of-Work (PoW) consensus protocol~\cite{nakamoto2008bitcoin}, all the miners in Ethereum competitively solve a blockchain puzzle and the winner packages the received transactions into a block and appends the new block to the end of Ethereum blockchain.
From then on, it is hard to tamper with the blockchain state updated by the transaction (i.e., transferred fund, executed function or created contract) as each miner maintains a copy of the new block and is aware of changes made by transactions within the new block and an adversary has to falsify majority of these copies in order to change the network consensus about the glocal state. 

There is another important concept in Ethereum that is highly relevant to transactions (or \textit{tx}), namely the messages (or \textit{msg}).
In Ethereum, smart contracts can be programmed in a way that enables EVM-level opcodes for system operations, such as \texttt{CREATE} and \texttt{CALL}, through which contracts are able to perform advanced operations such as creating new contracts or calling other contracts by sending out messages to change the blockchain state.
It is worth noting that CAs are not able to change the blockchain state by themselves.
Instead, any message creation at a CA must be triggered by a transaction sent from EOAs or another message created at other CAs, which is also triggered by EOAs from the root.
This property offers two approaches to update the state of a smart contract (say, CA1) in Ethereum:
(1) send a function invocation transaction ($tx$) to CA1, which directly updates the state of CA1;
(2) send a $tx$ to another contract CA2, where the $tx$ triggers CA2 to send a message $msg$ to CA1 and update the state of CA1.
In fact, our study found that the second approach supports not only function invocation transactions, but also the other two types of transactions in Ethereum.


   

\subsection{Gas system}
\label{s3.3}
In order to either deploy a new contract or call a deployed contract in Ethereum, one needs to spend Gas.
Based on the complexity of the contract or that of the called function, an amount of ether needs to be spent in order to purchase an amount of Gas, which is then paid to the miner that creates the new block. 

\begin{figure*}
\centering
{
   
    \includegraphics[width=13.5cm,height=7cm]{./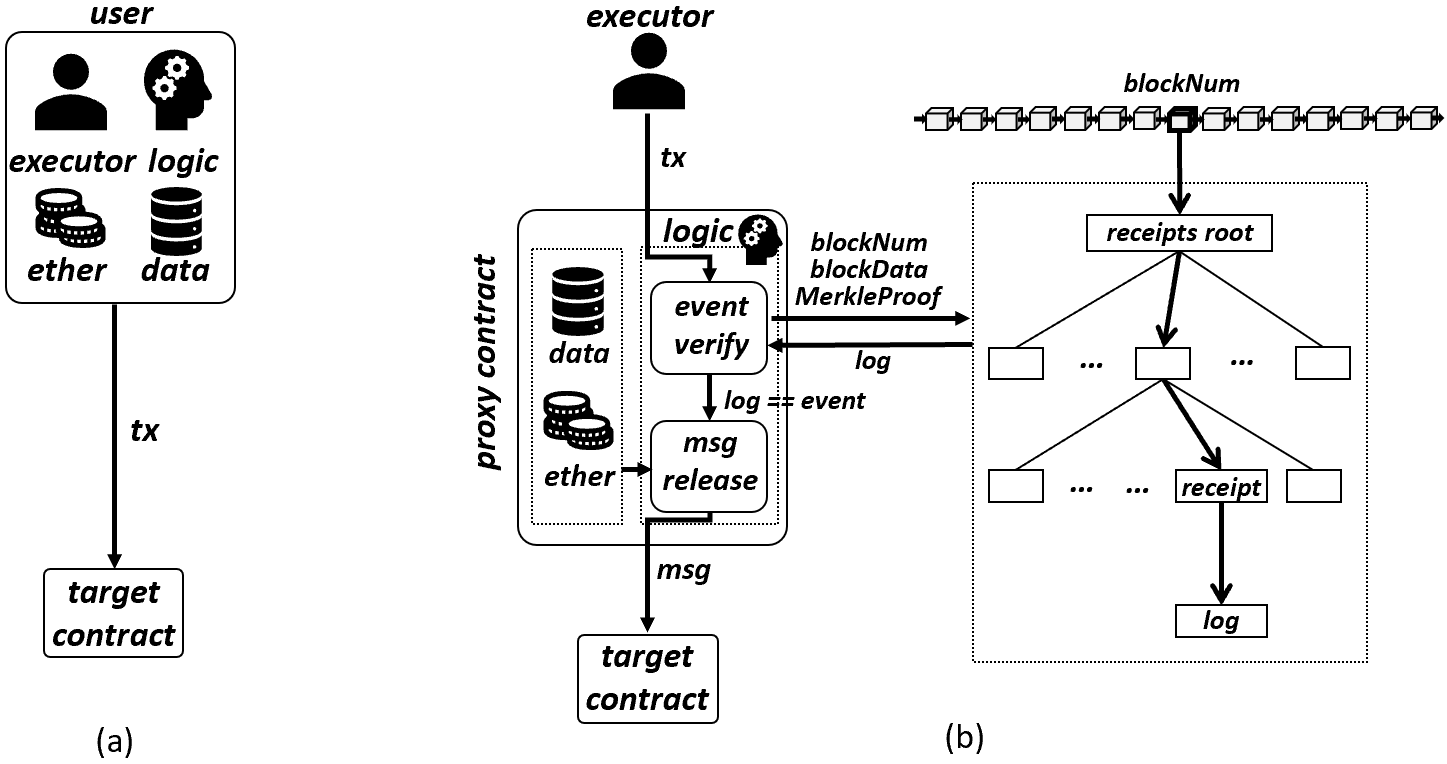}
}
\vspace{-1mm}
\caption {The architectures for Event-driven Transaction.
(a) the user-driven architecture (b) the proxy-contract-driven architecture. }
\vspace{-5mm}
\label{architecture}
\end{figure*}

\subsection{Events and logs}
\label{s3.4}
Events are inheritable members of smart contracts and are used for emphasizing the effect of each transaction~\cite{wood2014ethereum}. When an event is called along with a function invocation transaction, the LOG opcodes in Ethereum virtual machine store the event in the transaction’s log, as a part of the transaction receipt.
These logs are associated with the address of the contract and will be incorporated into the blockchain and stay there as long as a block is accessible.
Meanwhile, each block maintains a Merkle Patricia tree for receipts of all transactions within the block and the root of this tree, denoted as the \textit{receiptsRoot}, is stored in the block header.
Therefore, a contract similar to CryptoKitties~\cite{cryptokitties} may be designed in a way such that each function invocation transaction for releasing a kitty to the market notifies the appearance of the kitty with an event describing the properties of the kitty (e.g., $event(kitty\_color)$).
Similarly, an auction contract may employ events to notify the update of the highest bid (e.g., $event(updated)$).
With events, users are capable of tracking the state changes of interested smart contracts in a very efficient way.
Moreover, the facts that events are recorded in transaction receipts as logs and the Merkle root of transaction receipts are written into block header imply that the existence of events is provable using \textit{receiptsRoot} and the reliability of the Merkle proof is endorsed by the reliability of blockchain.
Motivated by these findings, we employ events and logs in the design of \texttt{EventWarden}.

In the next two sections, we present the architecture and protocol designed for \texttt{EventWarden}, respectively.
We summarize the notations that will be used in this section and in the rest of this paper in TABLE~\ref{t1}.

\begin{table}
\begin{center}
\begin{tabular}{|c|p{6.5cm}|}
\hline
\textbf{notation} & \textbf{description} \\
\hline
$U$ & a user of Event-driven Transaction (ET) \\
$E$ & an executor in ET \\
$C$ & a smart contract \\
$C.fun()$ & function $fun()$ within contract $C$ \\
$\Rightarrow$ & invoke a function within a contract\\
$addr(*)$ & an address of an EOA or a CA\\
\hline
\end{tabular}
\end{center}
\caption{Summary of notations.}      
\vspace{-5mm}
\label{t1}
\end{table}

\section{EventWarden: Architecture}
\label{s4}
In this section, we first present the key components of the architecture for event-driven transactions. We then present a naive user-driven architecture and finally introduce the proxy-contract-driven architecture designed for \texttt{EventWarden}.

\subsection{The four key components}

An architecture for event-driven transaction (\textit{ET}) includes four key components.
First, it requires a storage place for maintaining the elements of a scheduled transaction, including \textit{recipient}, \textit{value} and \textit{data} presented in Section~\ref{s3.2}.
Second, it requires an amount of ether not less than the amount indicated by the \textit{value} element, which is guaranteed to be available when the event occurs.
Third, it requires an EOA to execute the broadcasting of scheduled transaction after the event to change the blockchain state as expected.
Finally, it requires the logic to indicate the occurrence of the event.
We summarize the four components as \textit{et-data}, \textit{et-ether}, \textit{et-executor} and \textit{et-logic}.

Next, we present two different architectures for \textit{ET} and discuss how they handle the four key components, respectively.
We start by introducing the user-driven architecture shown in Fig.~\ref{architecture}.(a), where \textit{ET} is simply performed by a user who needs to handle all these components by herself and also needs to keep connecting with the blockchain network.
We then introduce the proxy-contract-driven architecture in Fig.~\ref{architecture}.(b), where a proxy contract is employed to manage \textit{et-data}, \textit{et-ether} and \textit{et-logic} and an EOA is recruited to take the role of \textit{et-executor} so that the user is totally relieved from all the burdens of handling components and keeping on-line.

\subsection{The user-driven architecture}

As introduced in Section~\ref{s3.2}, a straight-forward approach of implementing event-driven transactions would be to ask the user to store the elements of the scheduled transaction at her local machine and make the transaction get broadcasted via a client-side tool such as \textit{parity}~\cite{parity} once an event occurs.
In this scenario, user stores \textit{et-data} at a local machine (i.e., a PC) with the local programs handling \textit{et-logic} and employs a controlled EOA to both provide \textit{et-ether} and serve as \textit{et-executor}.
Then, upon finding an occurrence of the prescribed event, the client-side tool pushes the scheduled transaction into the blockchain network from the local machine and the amount of ether indicated by the \textit{value} element is transferred from the controlled EOA to the address indicated by the \textit{recipient} element.
In summary, this user-driven architecture is easy to be implemented.
However, even though this approach allows a user to be physically absent during the occurrence of events, it still demands the local machine that maintains \textit{et-data} to be connected with the blockchain network, which makes user-driven architecture hard to be adopted as a general approach that can isolates the service execution completely from the user side after the service has been set up.

\subsection{The proxy-contract-driven architecture}
\label{s4.3}

To isolate the service execution from the user side, all the four components need to be migrated from the user.
A naive way of realizing this goal would be to ask the user to outsource all the components to an EOA recruited from the Ethereum network so that the recruited EOA would be able to complete \textit{ET} on behalf of the user.
However, there are three severe consequences that are likely to occur:
(1) the recruited EOA may falsify \textit{et-data};
(2) the recruited EOA may embezzle received \textit{et-ether};
(3) the recruited EOA may violate the indication of \textit{et-logic}.
All these undesirable execution results can hardly be prevented in this naive approach. Hence, a natural question is that, can we design an architecture for \textit{ET} that offers all-or-nothing execution results?
That is, we would like either the scheduled transaction to be correctly executed along with the occurrence of the prescribed event, or the scheduled transaction to be never executed and the \textit{et-ether} to be transferred back to the user.
To achieve this goal, we recognize that \textit{et-data}, \textit{et-ether} and \textit{et-logic} need to be handled by a trusted party and executed in a deterministic manner, so we design a proxy smart contract ($C_{proxy}$) that can act on behalf of a user to manage the three components in a decentralized, trustworthy and deterministic way.

\vspace{-3mm}
\begin{algorithm}
    \footnotesize
    \SetKwInOut{Input}{Input}
    \SetKwInOut{Output}{Output}

    \Input{$blockNum, blockData, MerkleProof, et$-$event$, \\$et$-$data, et$-$ether$.}
    \Output{$msg$.}

    $receiptsRoot \gets verifyBlock(blockNum,blockData)$\;
    $receipt \gets verifyProof(receiptsRoot,MerkleProof)$\;
    $result \gets verifylog(receipt,et$-$event)$\;
    \If{$result == TRUE$} 
    {
      $msgRelease(et$-$data, et$-$ether)$\;
    }

    \caption{Proxy contract logic}
    \label{A1} 
\end{algorithm}
\vspace{-3mm}

As illustrated in Algorithm~\ref{A1}, with the proxy contract, \textit{ET} can get completed in two steps:
\begin{itemize}[leftmargin=*]
\item \textit{Event verify}:  
Immediately after the event occurs, any EOA in Ethereum is capable of calling the proxy contract $C_{proxy}$ with a function invocation transaction taking three arguments, namely the number of block carrying the event as log inside a transaction receipt ($blockNum$), the data of that block ($blockData$) and the Merkle proof for proving the existence of the receipt ($MerkleProof$).
With these arguments, the proxy contract is capable of first verifying the correctness of $blockData$ according to $blockNum$ and fetching the $receiptsRoot$ from $blockData$ (line 1), then verifying the existence of the declared $receipt$ using $MerkleProof$ and $receiptsRoot$ (line 2) and finally verifying the existence of the declared log and checking whether the log is equivalent to the event prescribed by user, namely \textit{et-event} (line 3).
\item \textit{Message release}:
In case that the log correctly indicates the occurrence of the prescribed event, the proxy contract releases the reserved transaction by sending out a message using \textit{et-data} and \textit{et-ether}.
\end{itemize}

The proxy-contract-driven architecture can effectively prevent the three undesirable consequences:
(1) the data recorded in  $C_{proxy}$, namely \textit{et-data}, can only get compromised by attacking the Ethereum blockchain;
(2) the amount of ether in $C_{proxy}$, namely \textit{et-ether}, is only allowed to be either transferred to the address of \textit{recipient} or to the user using a native \textit{selfdestruct} function, so no one would be able to embezzle \textit{et-ether};
(3) the correctness of implementation of \textit{et-logic} is provable, which we leave to Section~\ref{s6} to present in detail.
Next, we present the protocol designed for \texttt{EventWarden}.

\section{EventWarden: Protocol}
\label{s5}

In this section, we start by describing the event-driven transaction (ET) as a two-phase process in the context of the proposed proxy-contract-driven architecture.
We then present the protocol designed for \texttt{EventWarden}.
Throughout Section~\ref{s5}, we assume that the user is scheduling a function invocation transaction to call a function within a target smart contract denoted as $C_{target}$.
While we present the protocols in the context of a single type of transaction, we note that our protocols are also applicable to the other two types.

\subsection{ET as a two-phase process}
We describe the \textit{ET} problem as a two-phase process in the context of the proposed proxy-contract-driven architecture:

\begin{itemize}[leftmargin=*]

\item \textit{ET.schedule}: 
The user $U$ creates a proxy contract $C_{proxy}$ with a balance (if needed) and sends her service request to a service hub contract $C_{hub}$, which notifies the $C_{proxy}$ to potential executors $Es$.
\item \textit{ET.execute}:
Upon detecting the occurrence of the event specified in $C_{proxy}$, executor $E$ can invoke $C_{proxy}$ with a function invocation transaction to trigger the release of the reserved transaction in the form of a message.

\end{itemize}

\begin{figure}
\centering
{
   
    \includegraphics[width=8.4cm,height=5.5cm]{./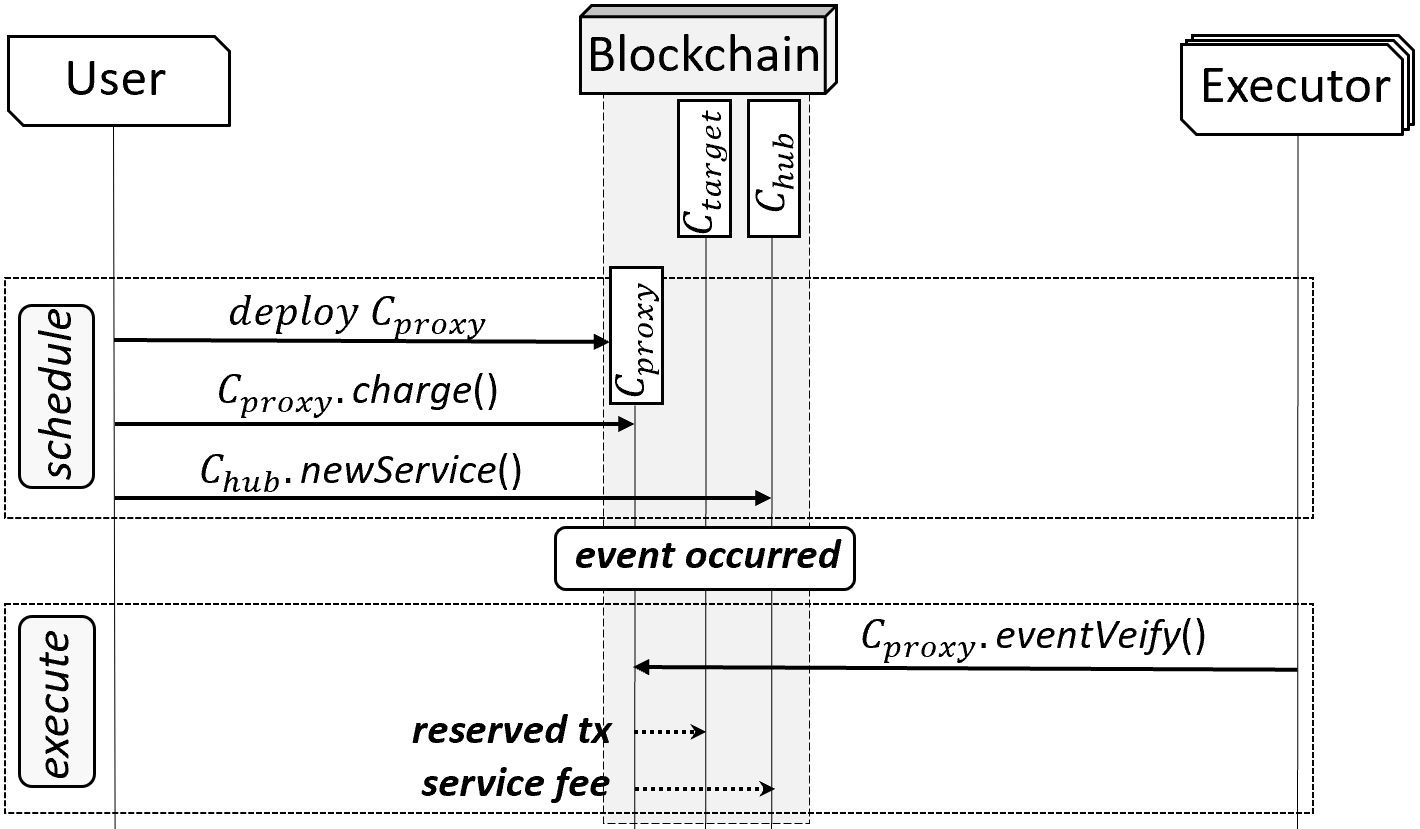}
}
\caption {Protocol sketch}
\vspace{-5mm}
\label{protocol_sketch_02} 
\end{figure}

\subsection{The protocol}

We now present the protocol in detail.
We sketch the protocol in Fig.~\ref{protocol_sketch_02} and present the formal description in Fig.~\ref{protocol_detail_2}.
Concretely, the protocol employs a service hub contract ($C_{hub}$) to manage all $ET$ service requests, so executors only need to track events released from $C_{hub}$ to get notified about new $ET$ service requests.
We next describe the two-phase process in detail.

\noindent \textbf{\textit{ET.schedule}}: 
User $U$ first deploys a proxy contract $C_{proxy}$.
User $U$ then needs to send an amount of ether via function $charge()$ to $C_{proxy}$. The amount of ether received by $C_{proxy}$ should be the sum of service fee for paying the executor and \textit{et-ether} (if not empty). 
After that, user $U$ sets up a new \textit{ET} service with $C_{hub}$ via function $newService()$ and specifies the service details, namely the address (i.e., $addr(*)$) of the proxy contract $C_{proxy}$.
Upon getting notified by $C_{hub}$ about this new service request, executors $Es$ can start monitoring the occurrence of the event specified in the corresponding $C_{proxy}$.

\noindent \textbf{\textit{ET.execute}}:
Upon detecting the occurrence of the event specified in $C_{proxy}$, any executor $E$ is capable of constructing $blockNum$, $blockData$ and $MerkleProof$ locally and call function $eventVerify()$ inside $C_{proxy}$ with a function invocation transaction.
Upon getting invoked, function $eventVerify()$ follows Algorithm~\ref{A1} to verify the occurrence of the specified event.
Function $eventVerify()$ then call a private $msgRelease()$ function to release the prescribed message and then transfer the service fee to the executor.

\section{Security analysis}
\label{s6}

In this section, we present the security analysis for the proposed \texttt{EventWarden}. 

\begin{lemma}
\label{lemma1}
The reserved transaction can only get released from the proxy contract when the prescribed event has taken place.
\end{lemma}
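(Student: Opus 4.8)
The plan is to argue that the only code path in $C_{proxy}$ that emits the reserved message is the branch guarded by the three successive verification checks in Algorithm~\ref{A1}, and that each of those checks can only pass when the prescribed event genuinely appears in the blockchain logs. Concretely, I would first establish that $C_{proxy}$, once deployed, exposes exactly one public entry point capable of triggering \emph{msgRelease} — namely \texttt{eventVerify()} — and that the only other state-changing entry points are \texttt{charge()} (which only increases the contract balance) and the user's \emph{selfdestruct} refund path (which sends \emph{et-ether} back to $U$ and does not release the reserved transaction). Since a contract account cannot act spontaneously (as noted in Section~\ref{s3.2}, every message must be rooted in an EOA transaction), this case analysis over entry points is exhaustive.

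Next I would walk through the guard in \texttt{eventVerify()} line by line. Line~1 recomputes \emph{receiptsRoot} from the submitted \emph{blockData} and checks it against the block identified by \emph{blockNum} as recorded on-chain; by the collision resistance of the block-header hash this forces \emph{blockData} — and hence \emph{receiptsRoot} — to be the genuine value for that block. Line~2 verifies the supplied \emph{MerkleProof} against that authenticated \emph{receiptsRoot}, so by the collision resistance of the Merkle Patricia trie the claimed \emph{receipt} must be the actual receipt of a transaction included in that block. Line~3 then inspects the logs inside that authenticated receipt and checks equality with \emph{et-event}. Only if all three succeed is \emph{msgRelease(et-data, et-ether)} reached. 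Chaining the three implications: a successful call implies a real block, containing a real receipt, containing a log equal to \emph{et-event} — i.e.\ the prescribed event has in fact occurred and been committed to the chain. The contrapositive is the statement of the lemma.

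I would close by noting the converse-direction subtlety that should be addressed for completeness: \emph{et-ether} never leaves $C_{proxy}$ except to \emph{recipient} (via \emph{msgRelease}) or to $U$ (via \emph{selfdestruct}), and the \emph{et-data}/\emph{recipient}/\emph{value} fields are immutable after deployment, so an executor cannot, e.g., redirect the reserved transaction or drain the contract by calling with crafted arguments — the worst a malformed call can do is fail a guard and revert. The main obstacle is not any single step but making the "exactly one triggering path" claim airtight: it rests on the concrete Solidity implementation of $C_{proxy}$ (no \texttt{delegatecall}, no fallback that forwards value, \emph{msgRelease} declared \texttt{private}, no re-entrancy window that lets a nested call reach \emph{msgRelease} before the guard completes), so I would either exhibit the relevant contract code or state these as explicit assumptions on $C_{proxy}$. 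The cryptographic steps themselves reduce to standard collision-resistance arguments for the block hash and the receipts trie, which I would cite rather than reprove.
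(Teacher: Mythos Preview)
Your proposal is correct and follows the same core approach as the paper: the reserved transaction is reachable only through the guarded path in \texttt{eventVerify()}, and passing that guard amounts to exhibiting a blockchain-committed log equal to \emph{et-event}, whose soundness is inherited from the reliability of the chain. The paper's own argument is considerably terser---it simply points to Algorithm~\ref{A1} and asserts that the verification ``is endorsed by the reliability of blockchain''---whereas you additionally spell out the entry-point case analysis, the collision-resistance reductions for each of the three checks, and the implementation-level assumptions on $C_{proxy}$; these additions are sound and strictly strengthen the paper's informal justification rather than departing from it.
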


\begin{proof}
As illustrated in Algorithm~\ref{A1}, upon getting invoked by a function invocation transaction, the logic inside the proxy contract, namely the $eventVerify()$ function would verify the occurrence of the prescribed event before releasing the reserved transaction. Concretely, verifying the occurrence of the prescribed event is equivalent to verifying the existence of the log for storing the event in the blockchain. 
As presented in Section~\ref{s3.4}, the reliability of the verification is endorsed by the reliability of blockchain. 
Together, as long as we could trust the blockchain, we could also trust the correctness of the verification done in the proxy contract.
\end{proof}

\begin{figure}
\begin{minipage}{0.5\textwidth}
\begin{small}
\begin{mdframed}[innerleftmargin=8pt]

\noindent \textbf{ET.schedule:} 
\begin{packed_enum}[leftmargin=*]
  \item User $U$ deploys a proxy contract $C_{proxy}$.
  \item User $U \Rightarrow C_{proxy}$: $charge(ether)$.
  \item User $U \Rightarrow C_{hub}$: $newService(addr(C_{proxy}))$.
\end{packed_enum}

\noindent \textbf{ET.execute:} 
\begin{packed_enum}[leftmargin=*]
  \setcounter{enumi}{3}
  \item Any executor $E \Rightarrow C_{proxy}.eventVerify(blockNum, \\blockData, MerkleProof)$.
\end{packed_enum}

\end{mdframed}
\end{small}
\end{minipage}

\captionof{figure}{
Formal protocol
}
\vspace{-5mm}
\label{protocol_detail_2}
\end{figure}

\begin{lemma}
\label{lemma2}
As long as a single member in Ethereum is incentivized by the service fee to call the proxy contract after the prescribed event has taken place, the reserved transaction is guaranteed to get released.
\end{lemma}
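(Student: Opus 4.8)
\textbf{Proof proposal for Lemma~\ref{lemma2}.}
The plan is to argue contrapositively about incentives: I will show that once the prescribed event has been written into the blockchain logs, the task of calling $C_{proxy}.eventVerify()$ becomes a well-defined, publicly executable action whose successful completion yields the service fee, so any rational member of the network will perform it. First I would establish that after the event occurs, every ingredient needed to invoke $eventVerify()$ is publicly available: the block number $blockNum$ carrying the receipt, the block data $blockData$, and the Merkle proof $MerkleProof$ against the $receiptsRoot$ can all be reconstructed locally by anyone running a full node, since blocks and receipts are permanently part of the public ledger (as recalled in Section~\ref{s3.4}). Hence constructing the arguments requires no secret information and no cooperation from the user $U$.

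Second, I would invoke Lemma~\ref{lemma1} together with the logic of Algorithm~\ref{A1} to argue that such an invocation is \emph{guaranteed to succeed}: because the event genuinely occurred, the checks $verifyBlock$, $verifyProof$, and $verifylog$ all pass, so control reaches $msgRelease(et\text{-}data, et\text{-}ether)$ and, per the protocol in Fig.~\ref{protocol_detail_2}, the service fee reserved in $C_{proxy}$ via $charge()$ is transferred to the caller. The key point is that there is no way for this transaction to revert due to a failed verification once the event is on-chain, so the executor faces no risk of wasting gas on a rejected call beyond ordinary network-level race conditions.

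Third, I would close the incentive argument: the service fee was set by $U$ during \textit{ET.schedule} to exceed the gas cost of the $eventVerify()$ call (this is the standard assumption for such reward-driven schemes, analogous to \textit{Ethereum Alarm Clock}~\cite{Clock}), so the net payoff to the first caller is strictly positive. Therefore, assuming at least one network member is economically rational and monitors the events emitted by $C_{hub}$ (which is exactly how executors $Es$ are notified of the service request), that member will submit the call, and by the previous step the reserved transaction is released. I would note that only one such member is needed, since the action is permissionless; competition among multiple executors only accelerates the release.

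The main obstacle I anticipate is not the mechanics but pinning down the precise assumptions under which the claim holds: specifically, that the service fee strictly dominates the (a priori unbounded, since it depends on $blockData$ size and proof length) gas cost of verification, and that ``a single member is incentivized'' is interpreted as that member being online, synced, and rational. I would handle this by stating these as explicit hypotheses at the start of the proof — the fee-covers-gas condition is enforceable by $C_{proxy}$ at schedule time, and the liveness of one rational executor is exactly the lemma's premise — so the remaining argument is the straightforward chain: public data $\Rightarrow$ executable call $\Rightarrow$ guaranteed success by Lemma~\ref{lemma1} $\Rightarrow$ positive reward $\Rightarrow$ the call is made $\Rightarrow$ the transaction is released.
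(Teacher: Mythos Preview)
Your proposal is correct and follows essentially the same line as the paper's own proof: both argue that the inputs to $eventVerify()$ are publicly reconstructible by any EOA, that the call is permissionless, and that the service fee therefore suffices to guarantee some executor submits the call. Your version is in fact more careful than the paper's (you explicitly chain through Lemma~\ref{lemma1} and isolate the fee-covers-gas assumption), while the paper adds one small point you do not stress, namely that decentralization also provides censorship resistance since no adversary can intercept calls arriving from arbitrary EOAs worldwide.
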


\begin{figure}
\begin{minipage}{\linewidth}
\begin{lstlisting}[
linewidth=8.7cm,
language=Solidity,
basicstyle=\scriptsize,
caption=The \textit{eventVerify()} function,
label={I1},
frame=single,
numbers=left,
numbersep=5pt,
breaklines=true,
% basicstyle=\ttfamily
  % postbreak=\mbox{\textcolor{red}{$\hookrightarrow$}\space},
breakatwhitespace=true
]
pragma solidity ^0.5.0;
import "RLPReader.sol";

contract C_proxy {
  ...
  using RLPReader for RLPReader.RLPItem;
  using RLPReader for bytes;
  function eventVerify(
        bytes memory _rlpBlockData, uint _blockNum, 
        bytes memory _rlpMerkleRoot, uint _idMerkleRoot,
        bytes memory _rlpMerkleBranch, uint _idMerkleBranch,
        bytes memory _rlpMerkleLeaf
        ) public {
    // event verify
    require(keccak256(_rlpBlockData) == 
        blockhash(_blockNum));
    RLPReader.RLPItem[] memory lsBlockHeader = 
        _rlpBlockData.toRlpItem().toList(); 
    bytes32 receiptsRoot = 
        bytes32(lsBlockHeader[5].toUint());
    require(keccak256(_rlpMerkleRoot) == receiptsRoot);
    RLPReader.RLPItem[] memory lsMerkleRoot = 
        _rlpMerkleRoot.toRlpItem().toList(); 
    bytes32 merkleBranch = 
        bytes32(lsMerkleRoot[_idMerkleRoot].toUint());
    require(keccak256(_rlpMerkleBranch) == merkleBranch);
    RLPReader.RLPItem[] memory lsMerkleBranch = 
        _rlpMerkleBranch.toRlpItem().toList(); 
    bytes32 merkleLeaf = 
        bytes32(lsMerkleBranch[_idMerkleBranch].toUint());
    require(keccak256(_rlpMerkleLeaf) == merkleLeaf);
    RLPReader.RLPItem[] memory lsMerkleLeaf = 
        _rlpMerkleLeaf.toRlpItem().toList(); 
    bytes memory rlpReceipt = lsMerkleLeaf[1].toBytes();
    RLPReader.RLPItem[] memory lsReceipt = 
        rlpReceipt.toRlpItem().toList(); 
    bytes memory rlpLog = lsReceipt[3].toBytes();
    require(keccak256(rlpLog) == specifiedEvent);
    // message release 
    msgRelease();
    msg.sender.transfer(service_fee);
  }
  ...
}
\end{lstlisting}
\end{minipage}
\end{figure}

\begin{figure}
\begin{minipage}{\linewidth}
\begin{lstlisting}[
linewidth=8.7cm,
language=Solidity,
basicstyle=\scriptsize,
caption=The \textit{msgRelease()} function,
label={I2},
frame=single,
numbers=left,
numbersep=5pt,
breaklines=true,
% basicstyle=\ttfamily
  % postbreak=\mbox{\textcolor{red}{$\hookrightarrow$}\space},
breakatwhitespace=true
]
contract C_proxy {
  ...
  address payable _recipient = ...;
  uint _value = ...;
  function fundTransfer() private {  
    _recipient.transfer(_value);
  }

  address _recipient = ...;
  string _function_selector = ...; 
  uint _arg1; 
  uint _arg2;
  function functionInvocation() private {  
    _recipient.call(abi.encodeWithSignature(
        _function_selector, _arg1, _arg2));
  }

  bytes _bytecode = ...;
  function contractCreation() private {  
    address _deployedAddr;
    bytes memory _bc = _bytecode;
    assembly {
      _deployedAddr := create(0, add(_bc, 0x20), mload(_bc))
    }
  }
  ...
}
\end{lstlisting}
\end{minipage}
\end{figure}

\begin{proof}
As we have discussed in Section~\ref{s4.3}, the proposed proxy-contract-driven architecture isolates \textit{et-executor} from \textit{et-data}, \textit{et-ether} and \textit{et-logic}.
Concretely, \texttt{EventWarden} outlines no eligibility requirements for becoming executors. Any EOA in Ethereum are eligible for providing services for \texttt{EventWarden}. 
This is because that the vital components, namely  \textit{et-data}, \textit{et-ether} and \textit{et-logic} are maintained by the trustworthy proxy contract, so the need for verifying the qualification of executors is minimized.
Thanks to this strategy, every member from the huge Ethereum community becomes a potential service provider for \texttt{EventWarden} and the solid underlying community could make \texttt{EventWarden} easier to get launched in practice.
Moreover, anyone in Ethereum is capable of constructing $blockNum$, $blockData$ and $MerkleProof$ locally and calling the $eventVerify()$ function to complete the service, so the service is completely decentralized. 
This fact also facilitates the reliability of the service because it is impracticable for anyone to intercept all function invocation transactions sent by different EOAs from different places in the world for calling $eventVerify()$.
\end{proof}

\section{Implementation and evaluation}
\label{s7}
In this section, we present the implementation and evaluation for \texttt{EventWarden} in detail.

\subsection{Implementation}
We programmed both the proxy contract and hub contract using the contract-oriented
programming language \textit{Solidity}~\cite{Solidity2017} and we tested the contracts over the
Ethereum official test network \textit{Rinkeby}~\cite{Rinkeby2017}.
We employed the \textit{solidity-rlp} library~\cite{solidity-rlp} for decoding data encoded with the Ethereum RLP (Recursive Length Prefix) rules~\cite{wood2014ethereum} in smart contracts.
Next, we present our detailed implementation for the two most challenging functions, namely \textit{eventVerify()} and \textit{msgRelease()}, respectively.

The implementation for function \textit{eventVerify()} in the proxy contract $C_{proxy}$ is shown as Function~\ref{I1}.
The contract '\textit{RLPReader.sol}' is imported (line 2,6-7) from the \textit{solidity-rlp} library for decoding encoded data input to the \textit{eventVerify()} function.
The functions \textit{verifyBlock()}, \textit{verifyProof()} and \textit{verifyLog()} specified in Algorithm~\ref{A1} are implemented at line 15-20, line 21-34 and line 35-37 in Function~\ref{I1}, respectively.
Specifically, \textit{verifyBlock()} is implemented by first fetching the hash of block specified by the input $\_blockNum$ from the blockchain and verifying that the input $\_rlpBlockData$ has the same hash value (line 15-16), then decoding the input $\_rlpBlockData$ as a list of components (line 17-18) and finally picking out the value for $receiptsRoot$ from the list (line 19-20).
After that, \textit{verifyProof()} is implemented by first verifying the input $\_rlpMerkleRoot$ and decoding $\_rlpMerkleRoot$ to get the hashed value for the branch node on the path of Merkle proof, namely $merkleBranch$ (line 21-25),
then verifying the input $\_rlpMerkleBranch$ and decoding it to get the hashed value for the leaf node $merkleLeaf$ (line 26-30)
and finally verifying the input $\_rlpMerkleLeaf$ and decoding it to get the data for the declared transaction receipt $rlpReceipt$ (line 31-34).
In the third step, \textit{verifyLog()} is implemented by decoding $rlpReceipt$ to get the data for the declared log $rlpLog$ (line 35-37).
Finally, after verifying the equivalence between the log and the specified event (line 38), $msgRelease()$ is invoked to release the reserved transaction (line 40) and the service fee is sent to the executor (line 41).

The function \textit{msgRelease()} could be implemented as one of three functions shown in Function~\ref{I2}, which depends on the type of reserved transaction.
Specifically, \textit{msgRelease()} for releasing a reserved fund transfer transaction, a reserved function invocation transaction or a reserved contract creation transaction could be implemented as line 3-7, line 9-16 or line 18-25 in Function~\ref{I2}, respectively. 

\subsection{Evaluation}

Similar to recent work on blockchain-based platforms and protocols~\cite{dziembowski2019perun,das2018yoda}, the key focus of our evaluation is on measuring gas consumption, namely the amount of transaction fees spent in the protocol. This is due to the fact that the execution complexity in Ethereum is measured via gas consumption.
It is worth noting that, in Ethereum, the amount of gas that a single transaction may spend is bounded by a system parameter and hence, the time overhead of executing functions inside smart contracts is small, usually in the scale of hundreds of milliseconds.

\begin{table}
\begin{center}
\begin{tabular}{|c |c |p{3.0cm}|p{0.9cm}|p{0.6cm}|}
\hline
    \textbf{Phase} & \textbf{Step} & \textbf{Function} & \textbf{Gas} & \textbf{UDS} \\ \hline
    \multirow{3}{*}{\textbf{ET.schedule}}
    & 1 & deploy $C_{proxy}$  & 889764 & \$2.60 \\
    & 2 &  $charge()$ & 21497 & \$0.06 \\
    & 3 & $newService()$ & 45612 & \$0.13 \\ \hline
    \textbf{ET.execute}
    & 4 & $eventVerify()$ & 175674 & \$0.51 \\ \hline
    \textbf{Other}
    &  & $close()$ & 13662 & \$0.04 \\ \hline
\end{tabular}
\end{center}
\vspace{-2mm}
\caption{Key functions and their cost in Gas and USD.}      
\vspace{-6mm}
\label{t4}
\end{table}

In TABLE~\ref{t4}, we list the key functions in the programmed smart contracts that interact with protocol participants during different phases of the protocol and the cost of these functions in both Gas and USD.
The cost in USD was computed through 
$cost(USD)=cost(Gas)*GasToEther*EtherToUSD$,
where $GasToEther$ and $EtherToUSD$ were taken as their mean value during the first half of the year 2019 recorded in \textit{Etherscan}~\cite{etherscan}, 
which are $1.67*10^{-8}$ Ether/Gas and 175 USD/Ether, respectively.
As illustrated by the results, in \texttt{EventWarden}, the completion of a service only requires a user to deploy the proxy contract $C_{proxy}$ (\$2.60), transfer the required amount of ether to $C_{proxy}$ via function $charge()$ (\$0.06) and set up a new service at the hub contract $C_{hub}$ via function $newService()$ (\$0.20) during \textit{ET.schedule}
and an executor to invoke \textit{eventVerify()} (\$0.51) during \textit{ET.execute}, which costs only \$3.30 in total.
The user may also choose to call function $close()$ to shut down the service and get back her ether from the proxy contract, which costs \$0.04.

In summary, our implementation and evaluation demonstrate that \texttt{EventWarden} is effective and is ready-to-use.

\section{Conclusion}
\label{s8}
This paper proposes \texttt{EventWarden}, a decentralized event-driven proxy service for users to outsource any type of transaction in Ethereum-like blockchains.
\texttt{EventWarden} employs a novel combination of smart contracts and blockchain logs.
\texttt{EventWarden} allows a user to create a proxy smart contract that specifies an interested event and also reserves an arbitrary transaction to release.
Upon observing the occurrence of the prescribed event, anyone in the Blockchain network can call the proxy contract to earn the service fee reserved in the contract by proving to the contract that the event has been recorded into blockchain logs, which then automatically triggers the proxy contract to release the reserved transaction.
We show that the reserved transaction can only get released from the proxy contract when the prescribed event has taken place.
We also demonstrate that as long as a single member in the Blockchain network is incentivized by the service fee to call the proxy contract after the prescribed event has taken place, the reserved transaction is guaranteed to get released.
We implement \texttt{EventWarden} over the Ethereum official test network. The results demonstrate that \texttt{EventWarden} is effective and is ready to be used in practice.

\section*{Acknowledgement}
Chao Li acknowledges the partial support by the Fundamental Research Funds for the Central Universities (Grant No. 2019RC038).

\renewcommand\refname{Reference}
\bibliographystyle{IEEEtran}
\urlstyle{same}

\bibliography{main.bib}



\end{document}